\documentclass[11pt]{article}
\usepackage{authblk}
\usepackage[margin=1in]{geometry}
\usepackage{amsmath, amssymb, amsthm, mathtools}
\usepackage{bm}
\usepackage{enumitem}
\usepackage{hyperref}
\usepackage{microtype, comment}
\usepackage{natbib}
\hypersetup{
    colorlinks=true,
    linkcolor=blue,
    citecolor=blue,
    urlcolor=blue
}

\numberwithin{equation}{section}

\newtheorem{theorem}{Theorem}[section]
\newtheorem{assumption}{Assumption}[section]

\newcommand{\R}{\mathbb R}
\newcommand{\Prob}{\mathbb P}

\newcommand{\ind}{\mathbf 1}
\newcommand{\argmax}{\operatorname*{arg\,max}}
\newcommand{\argmin}{\operatorname*{arg\,min}}
\newcommand{\tr}{\operatorname{tr}}

\newcommand{\KL}{\operatorname{KL}}
\newcommand{\BFDR}{\operatorname{BFDR}}
\newcommand{\FDR}{\operatorname{FDR}}

\title{Bayesian Distilled Clustering for High-Dimensional Mixture Models}
\author[1]{Pulkita Aggarwal\thanks{Corresponding author: pulkitaaggarwal09@gmail.com}}
\author[2]{Abhishek Bhattacharjee\thanks{Corresponding author: abhishek@theabstractmath.com}}

\affil[1,2]{Abstract Math Institute}
\affil[1]{\texttt{pulkitaaggarwal09@gmail.com}}
\affil[2]{\texttt{abhishek@theabstractmath.com}}

\begin{document}

\maketitle

\section{Introduction}
\label{sec:introduction}

Latent subgroup analysis is a central problem in modern statistics, with applications in genomics, precision medicine, social science, and large-scale observational studies. In these settings, the observed population is often heterogeneous, and the scientific objective is not merely to predict an outcome, but to identify interpretable subpopulations with distinct covariate structure, response behavior, or dependence patterns. Mixture models provide a natural probabilistic framework for this task. They allow the data-generating distribution to be represented as a weighted combination of subgroup-specific laws, with the subgroup label treated as an unobserved random variable.

The high-dimensional regime presents a more delicate problem. When the number of measured covariates is large, only a small subset of variables may be responsible for meaningful subgroup separation. The remaining variables may contain measurement noise, nuisance variation, or correlation induced by the relevant covariates, but they need not carry independent information about subgroup membership. Standard clustering methods often treat all variables as exchangeable inputs. In high dimensions, this can be statistically inefficient and scientifically misleading: irrelevant coordinates accumulate noise, distort distances, and may obscure the low-dimensional structure that defines the latent classes.

This paper develops a Bayesian distilled clustering framework for high-dimensional mixture models. The basic premise is that clustering should be carried out in a statistically justified subspace rather than in the full ambient space. The proposed method first uses a Bayesian variable-selection model to estimate posterior inclusion probabilities for the covariates. These probabilities quantify the evidence that each covariate contributes to subgroup separation or subgroup-specific response behavior. A distilled covariate set is then obtained by controlling the posterior expected false-discovery proportion. Clustering is subsequently performed only on this selected subspace. Finally, within each inferred subgroup, conditional-independence diagnostics are used to study subgroup-specific dependence among the selected variables.

The formulation is deliberately model-based. The selected variables are not obtained by marginal screening, nor by a generic dimension-reduction criterion. Instead, the distillation step is tied to the mixture structure and to the response model. This makes the resulting subspace directly aligned with the scientific object of interest: latent subgroups that differ in distributional structure and response behavior.

\subsection{Existing Works}
\label{subsec:existing works}

The literature on variable selection for clustering problems from a frequentist perspective includes methods based on the ideas of sparse K-means by \cite{witten2010framework}, pairwise penalized model-based clustering by \cite{guo2010pairwise}, and L1-penalized EM algorithms for finite mixtures of regressions by \cite{stadler2010l1}. The extension of the latter approach to group-lasso penalization without sample splitting was provided by \cite{wang2024statistical}. None of these methods maintain an uncertainty measure over the selected set of variables and do not provide a means of controlling the false discoveries of variables within the selection step.

The Bayesian approach to variable selection in clustering was first considered by \cite{tadesse2005bayesian}. \cite{kim2006variable} extended this approach to Dirichlet process mixtures. \cite{raftery2006variable} employed a likelihood approach to model variable selection, where models including and excluding each variable were compared using Bayes factors and greedy search procedures. \cite{maugis2009variable} considered variable selection within Dirichlet process mixtures, but with a more careful modeling of the dependence between relevant and irrelevant variables. In each of these approaches, however, no asymptotic theoretical results are provided for cases in which the number of variables p is allowed to grow at a much faster rate than the number of observations n, and where a response model is included that is specific to subgroups of the clustering algorithm results.

The posterior concentration theory for high-dimensional Bayesian variable selection was developed by \cite{castillo2012needles}, and also by \cite{castillo2015bayesian}. The theory applies to the type of spike-and-slab priors with complexity penalties that have been used in this paper. \cite{scott2010bayes} provided an analysis of the multiplicity properties of these priors. The posterior false-discovery rate rule is based upon the approaches of \cite{newton2004detecting} and \cite{muller2004optimal}.

The graphical diagnostics for the subgroups is based upon the graphical lasso method of \cite{friedman2008sparse} and the debiasing approach of \cite{jankova2015confidence}, who prove that the debiased graphical lasso estimator is asymptotically normal entry-wise. Thus, statistical tests can be performed on each graphical edge within the estimated subgroups. A false-discovery rate control rule for graphical edges is that of \cite{benjamini1995controlling}.

The approach of this paper unifies these methods in a way that does not appear in the existing literature. While the variable selection and clustering steps are separate in terms of their computational approach, the two methods are based upon the same underlying probabilistic model that encompasses both the subgroup-specific covariate distribution and the response variable. Furthermore, the choice of subspace is based upon the posterior false-discovery rate rule rather than upon model selection methods. Asymptotic guarantees are established for each of the steps of the approach: variable selection, information preservation, clustering, response variable estimation, and graphical model estimation.
\subsection{Our contributions}
\label{subsec:contributions}

The main contributions are as follows.

\begin{itemize}[leftmargin=2em]
    \item We introduce a Bayesian distillation framework for high-dimensional mixture models in which feature selection and clustering are separated computationally but connected through a common probabilistic model.

    \item We define posterior inclusion probabilities for covariates and use them to construct a distilled covariate set with posterior false-discovery control. This provides a principled alternative to marginal screening or unsupervised dimension reduction.

    \item We show that, under a conditional-irrelevance assumption, the excluded covariates contain no additional subgroup information once the distilled variables are known. Consequently, the Bayes classifier based on all covariates coincides with the Bayes classifier based only on the relevant covariates.

    \item We establish theoretical properties of the procedure, including posterior selection consistency, preservation of subgroup information after distillation, clustering consistency, posterior contraction for subgroup-specific response parameters, and validity of subgroup-level conditional-independence tests.

    \item We formulate a subgroup-specific graphical diagnostic based on conditional independence. In the Gaussian case, this diagnostic is expressed through the precision matrix within each inferred subgroup and allows the analyst to study how selected covariates interact differently across latent classes.

    \item We provide a simulation framework for evaluating variable-selection accuracy, cluster recovery, prediction, and graph recovery in high-dimensional mixture settings.
\end{itemize}

\subsection{Structure of the paper}
\label{subsec:paper-structure}

The remainder of the paper is organized as follows. Section~\ref{sec:setting} gives the statistical model and defines the relevant covariate set. Section~\ref{sec:bayesian-distillation} presents the Bayesian distillation model, including the prior specification, posterior inclusion probabilities, and posterior false-discovery rule. Section~\ref{sec:clustering} describes the clustering step on the distilled subspace. Section~\ref{sec:conditional-independence} introduces subgroup-specific conditional-independence diagnostics. Section~\ref{sec:theory} states the assumptions and theoretical guarantees. Section~\ref{sec:methodology} summarizes the practical implementation. Section~\ref{sec:simulation} gives a recommended simulation design. Section~\ref{sec:conclusion} concludes. Proofs are collected in Appendix~\ref{app:proofs}.

\section{Statistical Setting}
\label{sec:setting}

Let
\[
D_n=\{(Y_i,X_i):1\le i\le n\}
\]
be independent observations, where \(Y_i\in\mathcal Y\) is a response and
\[
X_i=(X_{i1},\ldots,X_{ip})^\top\in\R^p
\]
is a high-dimensional covariate vector. The dimension \(p\) may be much larger than \(n\). The population is assumed to contain \(K\) latent subgroups. For each subject \(i\), let
\[
Z_i\in\{1,\ldots,K\}
\]
denote the unobserved subgroup label, with
\[
\Prob(Z_i=k)=\pi_k,\qquad \pi_k>0,\qquad \sum_{k=1}^K\pi_k=1.
\]

The central statistical assumption is that only a small subset of covariates carries information about the latent subgroup structure and the subgroup-specific response behavior. Let
\[
S_0\subset\{1,\ldots,p\}
\]
be the unknown set of relevant covariates, with
\[
|S_0|=s_0\ll p.
\]
The complement \(S_0^c\) consists of variables that may be noisy, correlated, or biologically measurable, but do not help distinguish the latent subgroups after the variables in \(S_0\) are known.

For a candidate subset \(S\), write \(X_{iS}\) for the coordinates of \(X_i\) indexed by \(S\). A rigorous model for the joint distribution of \((Y_i,X_i)\) is
\[
p(y,x)
=
\sum_{k=1}^K
\pi_k
q_k(x_S;\psi_k)
r_k(y\mid x_S;\theta_k)
h(x_{S^c}\mid x_S),
\]
where \(q_k\) is the subgroup-specific covariate density on the selected coordinates, \(r_k\) is the subgroup-specific response model, and \(h\) is the conditional distribution of the excluded variables. The key structural restriction is that \(h\) does not depend on \(k\). Thus the excluded variables may be statistically dependent on the selected variables, but they do not contain additional subgroup information once \(X_S\) is known.

A concrete Gaussian version, useful for theory and implementation, is
\[
X_{iS_0}\mid Z_i=k
\sim
N_{s_0}(\mu_{k0},\Sigma_{k0}),
\]
and
\[
Y_i\mid X_i,Z_i=k
\sim
f\{y;\eta_{ik},\phi_k\},
\qquad
\eta_{ik}=\alpha_k+X_{iS_0}^\top\beta_{k0}.
\]
Here \(f\) may be a Gaussian, Bernoulli, Poisson, or other regular exponential-family density. The covariates outside \(S_0\) satisfy
\[
p(X_{iS_0^c}\mid X_{iS_0},Z_i=k)
=
p(X_{iS_0^c}\mid X_{iS_0})
\]
for every \(k\). Therefore, irrelevant variables may still be statistically associated with relevant variables, but they do not help separate the latent groups.

This distinction is important. A variable is not called irrelevant merely because it has small marginal variance or weak marginal association with the response. It is irrelevant only if it contributes neither to subgroup separation nor to subgroup-specific response behavior after the selected variables are accounted for.

\section{Bayesian Distillation Model}
\label{sec:bayesian-distillation}

Introduce a binary inclusion vector
\[
\gamma=(\gamma_1,\ldots,\gamma_p)^\top\in\{0,1\}^p,
\]
where
\[
\gamma_j=
\begin{cases}
1, & \text{if covariate } j \text{ is included},\\
0, & \text{otherwise}.
\end{cases}
\]
Let
\[
S_\gamma=\{j:\gamma_j=1\}.
\]

For a given \(\gamma\), define the observed-data likelihood
\[
L_n(\Theta,\gamma)
=
\prod_{i=1}^n
\sum_{k=1}^K
\pi_k
q_k(X_{iS_\gamma};\psi_k)
r_k(Y_i\mid X_{iS_\gamma};\theta_k),
\]
where
\[
\Theta=\{\pi_k,\psi_k,\theta_k:1\le k\le K\}.
\]
The variables outside \(S_\gamma\) are not used in the clustering likelihood because, under the model, their conditional distribution does not depend on the latent subgroup once \(X_{iS_\gamma}\) is given.

A sparse prior is placed on \(\gamma\). One convenient choice is the complexity prior
\[
\Prob(\gamma)
\propto
c^{-|\gamma|}p^{-a|\gamma|},
\qquad a>1,\ c>0,
\]
where
\[
|\gamma|=\sum_{j=1}^p\gamma_j.
\]
This prior penalizes large models strongly enough to prevent the posterior from including many irrelevant variables when \(p\) is large.

Conditional on \(\gamma\), the mixture weights have prior
\[
(\pi_1,\ldots,\pi_K)
\sim
\operatorname{Dirichlet}(a_\pi,\ldots,a_\pi),
\]
and the subgroup-specific parameters receive proper continuous priors. For example, in the Gaussian covariate model,
\[
\mu_{k,S_\gamma}\mid \Sigma_{k,S_\gamma}
\sim
N_{|S_\gamma|}(m_0,\kappa_0^{-1}\Sigma_{k,S_\gamma}),
\]
and
\[
\Sigma_{k,S_\gamma}^{-1}
\sim
\operatorname{Wishart}(\nu_0,A_0).
\]
For the response model,
\[
\beta_{k,S_\gamma}\sim N_{|S_\gamma|}(0,\tau_\beta^2 I),
\qquad
\alpha_k\sim N(0,\tau_\alpha^2).
\]

The posterior distribution is
\[
\Pi(\Theta,\gamma\mid D_n)
=
\frac{
L_n(\Theta,\gamma)p(\Theta\mid\gamma)p(\gamma)
}{
\sum_{\gamma'}
\int
L_n(\Theta',\gamma')p(\Theta'\mid\gamma')p(\gamma')\,d\Theta'
}.
\]

For each variable \(j\), define its posterior inclusion probability
\[
w_j
=
\Pi(\gamma_j=1\mid D_n).
\]

The distilled set is chosen by controlling the posterior expected false-discovery proportion. For a threshold \(t\in[0,1]\), define
\[
\widehat S(t)=\{j:w_j\ge t\}.
\]
The posterior expected false-discovery proportion is
\[
\BFDR(t)
=
\frac{
\sum_{j:w_j\ge t}(1-w_j)
}{
|\widehat S(t)|\vee 1
}.
\]
For a preassigned level \(q\in(0,1)\), choose
\[
\widehat t
=
\inf\{t:\BFDR(t)\le q\},
\]
and set
\[
\widehat S=\widehat S(\widehat t).
\]
This rule has a clear interpretation: among the selected variables, the posterior expected fraction of irrelevant variables is at most \(q\).

\section{Clustering on the Distilled Subspace}
\label{sec:clustering}

After the distilled set \(\widehat S\) is obtained, the clustering step is performed only on \(X_{i\widehat S}\). The primary clustering model is
\[
X_{i\widehat S}\mid Z_i=k
\sim
q_k(\cdot;\psi_k),
\qquad
\Prob(Z_i=k)=\pi_k.
\]
In the Gaussian version,
\[
X_{i\widehat S}\mid Z_i=k
\sim
N_{|\widehat S|}(\mu_{k,\widehat S},\Sigma_{k,\widehat S}).
\]

Let
\[
\widehat\Psi
=
\{\widehat\pi_k,\widehat\mu_k,\widehat\Sigma_k:1\le k\le K\}
\]
denote the fitted mixture parameters. The posterior membership weight for observation \(i\) and subgroup \(k\) is
\[
\widehat r_{ik}
=
\frac{
\widehat\pi_k
\varphi_{|\widehat S|}
(X_{i\widehat S};\widehat\mu_k,\widehat\Sigma_k)
}{
\sum_{\ell=1}^K
\widehat\pi_\ell
\varphi_{|\widehat S|}
(X_{i\widehat S};\widehat\mu_\ell,\widehat\Sigma_\ell)
}.
\]
The estimated subgroup label is
\[
\widehat Z_i
=
\argmax_{1\le k\le K}\widehat r_{ik}.
\]

The response \(Y_i\) may be incorporated in two ways. In the primary version, \(Y_i\) is used to guide variable selection, but cluster assignment is based on the covariate subspace. This respects the goal of identifying covariate-defined latent subgroups. In a supervised extension, one may assign labels by the joint posterior weight
\[
\widehat r_{ik}^{\mathrm{joint}}
=
\frac{
\widehat\pi_k
q_k(X_{i\widehat S};\widehat\psi_k)
r_k(Y_i\mid X_{i\widehat S};\widehat\theta_k)
}{
\sum_{\ell=1}^K
\widehat\pi_\ell
q_\ell(X_{i\widehat S};\widehat\psi_\ell)
r_\ell(Y_i\mid X_{i\widehat S};\widehat\theta_\ell)
}.
\]
The covariate-only version is preferred when the scientific target is subgroup discovery. The joint version is useful when the target is response-stratified prediction.

\section{Conditional-Independence Diagnostics Within Subgroups}
\label{sec:conditional-independence}

For each estimated subgroup \(k\), define
\[
\widehat{\mathcal I}_k=\{i:\widehat Z_i=k\},
\qquad
\widehat n_k=|\widehat{\mathcal I}_k|.
\]
The goal is to study how the selected variables depend on one another within each subgroup. In the Gaussian case,
\[
X_{i\widehat S}\mid Z_i=k
\sim
N(\mu_k,\Sigma_k),
\]
and the conditional-independence structure is encoded by the precision matrix
\[
\Omega_k=\Sigma_k^{-1}.
\]
For two distinct selected variables \(a,b\in\widehat S\),
\[
X_a\perp X_b
\mid X_{\widehat S\setminus\{a,b\}},Z=k
\quad
\Longleftrightarrow
\quad
(\Omega_k)_{ab}=0.
\]
Thus the subgroup-specific graph is
\[
G_k=(\widehat S,E_k),
\]
where
\[
(a,b)\in E_k
\quad
\Longleftrightarrow
\quad
(\Omega_k)_{ab}\ne0.
\]

A high-dimensional estimator of \(\Omega_k\) may be obtained using a penalized likelihood,
\[
\widehat\Omega_k
=
\argmin_{\Omega\succ0}
\left\{
\tr(\widehat\Sigma_k\Omega)
-
\log\det(\Omega)
+
\lambda_k\sum_{a\ne b}|\Omega_{ab}|
\right\},
\]
where \(\widehat\Sigma_k\) is the sample covariance matrix within subgroup \(k\).

For inference on individual edges, use a debiased estimator \(\widehat\Omega_{k,ab}^{d}\). The test statistic is
\[
T_{k,ab}
=
\frac{
\sqrt{\widehat n_k}\widehat\Omega_{k,ab}^{d}
}{
\widehat\sigma_{k,ab}
}.
\]
Under the null hypothesis
\[
H_{0,k,ab}:(\Omega_k)_{ab}=0,
\]
the statistic satisfies
\[
T_{k,ab}\rightsquigarrow N(0,1).
\]
The resulting \(p\)-values are adjusted within each subgroup using the Benjamini-Hochberg procedure. This yields a subgroup-specific dependence graph with controlled false discoveries.

\section{Theoretical Properties}
\label{sec:theory}

The following results are stated for a clean Gaussian version of the model. This is the natural first theory for the proposed method. Extensions to generalized responses or non-Gaussian covariates require the same structure: identifiability, sparsity, regular priors, and concentration of likelihood ratios.

\subsection{Assumptions}
\label{subsec:assumptions}

\begin{assumption}[Sparse truth]
\label{ass:sparse-truth}
There exists a true subset \(S_0\subset\{1,\ldots,p\}\) with
\[
|S_0|=s_0,
\qquad
s_0\log p=o(n).
\]
The true inclusion vector is \(\gamma_0\), where \((\gamma_0)_j=1\) if and only if \(j\in S_0\).
\end{assumption}
This condition formalizes the high-dimensional regime in which only a small number of covariates drive the mixture structure and the subgroup-specific response model. The requirement \(s_0\log p=o(n)\) is the usual effective-sample-size condition for recovering a sparse subset among \(p\) candidates. It links the distillation step to the central statistical goal of the paper: clustering should be performed on the variables that carry subgroup information, not on the full ambient feature space.

\begin{assumption}[Irrelevant variables contain no subgroup information]
\label{ass:irrelevance}
For all \(k\),
\[
p_0(X_{S_0^c}\mid X_{S_0},Z=k)
=
p_0(X_{S_0^c}\mid X_{S_0}).
\]
Thus \(X_{S_0^c}\) may be correlated with \(X_{S_0}\), but it carries no additional information about \(Z\).
\end{assumption}

This assumption gives a precise meaning to irrelevance in the mixture setting. Variables in \(S_0^c\) are allowed to be correlated with the selected variables, so the condition does not impose marginal independence. Rather, it requires that, after conditioning on \(X_{S_0}\), the remaining variables do not further distinguish the latent subgroups. This is the population-level justification for distillation: removing \(S_0^c\) should reduce noise without removing information about the latent class.

\begin{assumption}[Identifiability]
\label{ass:identifiability}
For \(k\ne\ell\),
\[
(\mu_{k0},\Sigma_{k0},\alpha_k,\beta_{k0},\phi_k)
\ne
(\mu_{\ell0},\Sigma_{\ell0},\alpha_\ell,\beta_{\ell0},\phi_\ell).
\]
The mixture is identifiable up to relabeling of the \(K\) components.
\end{assumption}

The latent classes must correspond to distinct data-generating mechanisms. Since mixture labels are inherently arbitrary, identifiability is required only up to permutation of the component labels. Without this condition, neither posterior concentration nor clustering consistency has a well-defined target. The assumption ensures that the inferred subgroups can be compared to a population-level mixture decomposition rather than to an artifact of parameterization.

\begin{assumption}[Eigenvalue control]
\label{ass:eigenvalue}
There exist constants \(0<c<C<\infty\) such that, for every \(k\),
\[
c\le \lambda_{\min}(\Sigma_{k0})
\le
\lambda_{\max}(\Sigma_{k0})
\le C.
\]
\end{assumption}

This condition rules out degenerate or nearly singular within-component covariance matrices. It prevents subgroup separation from being created by ill-conditioned covariance structure rather than by stable distributional differences. The bounded-eigenvalue condition is also needed for likelihood concentration, Gaussian tail bounds, and precision-matrix inference within subgroups.

\begin{assumption}[Signal strength]
\label{ass:signal}
For every \(j\in S_0\), the variable has detectable subgroup or response signal. Define
\[
b_j
=
\max_{k\ne\ell}|\mu_{kj,0}-\mu_{\ell j,0}|
+
\max_k|\beta_{kj,0}|.
\]
Assume
\[
\min_{j\in S_0} b_j \ge b_n,
\]
where
\[
n b_n^2\ge C_1\log p
\]
for a sufficiently large constant \(C_1\).
\end{assumption}

Each relevant variable must leave a detectable trace either through subgroup separation or through the subgroup-specific response model. The lower bound \(n b_n^2\ge C_1\log p\) is a beta-min condition adapted to the high-dimensional setting. It separates true variables from noise at the scale needed to search over many candidate covariates. This condition is what makes posterior distillation possible rather than merely descriptive.

\begin{assumption}[Component separation]
\label{ass:separation}
Define
\[
\Delta_n
=
\min_{k\ne\ell}
\left[
(\mu_{k0}-\mu_{\ell0})^\top
\Sigma_0^{-1}
(\mu_{k0}-\mu_{\ell0})
\right]^{1/2},
\]
where \(\Sigma_0\) is a representative within-component covariance, such as a common covariance in the homoscedastic case. Assume
\[
\Delta_n\to\infty
\]
for vanishing clustering error, and
\[
\Delta_n^2\ge C_2\log n
\]
for exact recovery with probability tending to one.
\end{assumption}

The quantity \(\Delta_n\) measures the minimum separation between latent components in the informative subspace. Consistent clustering cannot be expected when components overlap too strongly, even if the correct variables are known. The stronger logarithmic separation condition gives exact label recovery with high probability. This assumption connects the variable-selection part of the method to the downstream clustering target: after the relevant subspace has been recovered, the components must still be statistically distinguishable.

\begin{assumption}[Prior regularity]
\label{ass:prior}
The prior on \(\gamma\) satisfies
\[
\Prob(\gamma)\propto c^{-|\gamma|}p^{-a|\gamma|},
\qquad a>1,
\]
and the parameter priors have continuous densities that are positive in neighborhoods of the true parameter values.
\end{assumption}

The prior assigns enough mass near the true sparse model while penalizing unnecessarily large models. The factor \(p^{-a|\gamma|}\), with \(a>1\), prevents the posterior from favoring large models that improve the likelihood only by fitting noise. Positivity of the parameter priors near the truth ensures that the Bayesian procedure does not exclude the data-generating parameter values. Together, these conditions make posterior model selection compatible with high-dimensional sparsity.

\subsection{Posterior Selection Consistency}
\label{subsec:selection-consistency}

\begin{theorem}[Posterior selection consistency]
\label{thm:selection}
Under Assumptions~\ref{ass:sparse-truth}, \ref{ass:identifiability}, \ref{ass:signal}, and \ref{ass:prior},
\[
\Pi(\gamma=\gamma_0\mid D_n)
\longrightarrow 1
\]
in \(P_0\)-probability. Consequently,
\[
\max_{j\in S_0}(1-w_j)\to0,
\qquad
\max_{j\notin S_0}w_j\to0
\]
in \(P_0\)-probability. Therefore, for any fixed threshold \(t\in(0,1)\),
\[
\widehat S(t)=S_0
\]
with probability tending to one.
\end{theorem}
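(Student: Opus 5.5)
The plan is the standard Castillo--van der Vaart route: write the posterior odds of every model against the true one, and show that their sum vanishes. Set
\[
\Pi(\gamma\mid D_n)/\Pi(\gamma_0\mid D_n)=\frac{p(\gamma)}{p(\gamma_0)}\,\frac{m_\gamma(D_n)}{m_{\gamma_0}(D_n)},
\]
where \(m_\gamma(D_n)=\int L_n(\Theta,\gamma)p(\Theta\mid\gamma)\,d\Theta\) is the marginal likelihood. It then suffices to show
\[
\sum_{\gamma\ne\gamma_0}\Pi(\gamma\mid D_n)/\Pi(\gamma_0\mid D_n)\to0
\]
in \(P_0\)-probability.

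The competing models split into two classes.
\begin{enumerate}
\item \emph{Overfitted models}, with \(S_\gamma\supsetneq S_0\).
\item \emph{Models missing a true variable}, with \(S_0\not\subset S_\gamma\).
\end{enumerate}
A general \(\gamma\) with \(S_\gamma\not\supset S_0\) is handled by the second bound. A \(\gamma\) that both misses and adds variables is compared to \(\gamma_0\) through \(S_\gamma\cup S_0\).

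The first step lower-bounds the denominator. Using prior positivity near the truth (Assumption~\ref{ass:prior}) and a local quadratic expansion of the log-likelihood on a KL neighbourhood of radius \(\epsilon_n^2\asymp s_0\log p/n\), I would show that with probability tending to one
\[
m_{\gamma_0}(D_n)\ge L_n(\Theta_0,\gamma_0)\,e^{-C s_0\log p}.
\]
This is the usual evidence lower bound (Ghosal--Ghosh--van der Vaart), and Assumption~\ref{ass:sparse-truth} makes the remainder \(o(1)\).

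The second step controls overfitted models. Adding \(m=|S_\gamma\setminus S_0|\) null variables, the extra parameters (means, covariance entries, regression coefficients) cost an Occam factor of order \(n^{-Cm/2}\) up to constants. The gain in maximized likelihood is a log-likelihood-ratio statistic for nested models. Uniformly over the \(\binom{p}{m}\) supersets, this statistic is at most \(C m\log p\) with high probability, by a union bound with Gaussian/chi-square tails and bounded eigenvalues. The prior ratio contributes \((cp^{a})^{-m}\). Summing over supersets gives
\[
\sum_m \binom{p}{m}p^{-am}e^{Cm\log p}\,n^{-Cm/2}.
\]
Since \(a>1\), this sum vanishes provided the constant in the likelihood-ratio bound is absorbed, which is where \(a>1\) matters.

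The third step controls models missing a true variable \(j\in S_0\). Here I would use Assumption~\ref{ass:signal}. Dropping \(j\) forces a misspecified fit whose KL divergence per observation from \(P_0\) is at least \(c\,b_n^2\), using identifiability (Assumption~\ref{ass:identifiability}) and eigenvalue control. Standard testing arguments then bound the likelihood ratio by \(e^{-c n b_n^2}\le p^{-cC_1}\) for each missing coordinate. This beats the combinatorial count and the at most \(p^{a s_0}\) prior advantage of smaller models once \(C_1\) is large.

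Converting the mixture-level KL gap into an explicit lower bound of order \(n b_n^2\), uniformly over all mixture parameters in the wrong model, is the main obstacle. Mixture likelihoods are non-convex and label-switching complicates the comparison. I would first try constructing exponentially consistent tests of \(H_0: P_0\) against the alternatives indexed by \(S_\gamma\). For example, one could test the first two conditional moments of \(X_j\) and \(Y\) given \(X_{S_\gamma}\), which differ by order \(b_n\). The Ghosal--van der Vaart theorem would then deliver the \(e^{-cnb_n^2}\) posterior bound directly.

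The consequences follow quickly from \(\Pi(\gamma=\gamma_0\mid D_n)\to1\). For every \(j\in S_0\) we have \(1-w_j\le\Pi(\gamma\ne\gamma_0\mid D_n)\), and for every \(j\notin S_0\) we have \(w_j\le\Pi(\gamma\ne\gamma_0\mid D_n)\). Both bounds are uniform in \(j\), so the maxima vanish. For fixed \(t\in(0,1)\), once \(\Pi(\gamma\ne\gamma_0\mid D_n)<\min(t,1-t)\) we get \(\widehat S(t)=S_0\), and this event has probability tending to one.
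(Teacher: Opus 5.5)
Your outline matches the paper's proof. You take posterior odds against $\gamma_0$, control supersets by an Occam factor, the prior ratio $(cp^a)^{-m}$ and the count $\binom pm$, and control models missing a true variable by a KL gap of order $b_n^2$. Your derivation of the $w_j$ and $\widehat S(t)$ statements from $\Pi(\gamma\ne\gamma_0\mid D_n)\to0$ is correct and more explicit than the paper's. \textbf{The superset step does not close as written, however.} For a union bound over $\binom pm\approx p^m$ supersets to succeed, your uniform bound on the log-likelihood gain must be at least about $m\log p$, i.e.\ $C\ge1$. A smaller $C$ is impossible in general: for $m=1$, twice the gain is roughly $\chi^2_d$ for each of $p-s_0$ nearly independent null coordinates, and their maximum is at least about $2\log p$. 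The $m=1$ term of your sum is then $p^{1+C-a}$ times a fixed power of $n^{-1}$, and the $m$-th term is roughly its $m$-th power. This diverges for $1<a<1+C$ whenever $\log p/\log n\to\infty$, which $s_0\log p=o(n)$ allows. The condition $a>1$ pays only for the count $\binom pm$, not for the worst-case gain, so the constant cannot be absorbed. To work with $a>1$ you need an averaged bound instead of a maximum, for example $E_0[m_\gamma/m_{\gamma_0}]\le c_*^m$ for supersets with $m$ extra coordinates. Then $E_0\sum_{S_\gamma\supsetneq S_0}\Pi(\gamma\mid D_n)/\Pi(\gamma_0\mid D_n)\le\sum_{m\ge1}(c_*p^{1-a}/c)^m\to0$. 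This holds with $c_*=1$ for nested Gaussian regressions with known variance and a $g$-prior on the added coefficients, where the Bayes factor is a ratio of two Gaussian densities at the projected residual. Otherwise you must assume $a>1+C$.

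\textbf{The step for models missing true variables has a parallel problem.} Via tests, what you actually obtain is the test bound divided by your evidence lower bound, namely $\exp\{-cnb_n^2+Cs_0\log p\}$. Assumption~\ref{ass:signal} gives only $nb_n^2\ge C_1\log p$ with fixed $C_1$ and no factor $s_0$, so this need not be small when $s_0\to\infty$. For the same reason, one factor $p^{-cC_1}$ cannot beat the prior advantage $p^{as_0}$ you mention. Your phrase ``for each missing coordinate'' would repair the count, but it needs two things you do not supply:
\begin{enumerate}
\item a KL gap of order $r\,b_n^2$ when $r$ true coordinates are dropped, which the per-coordinate Assumption~\ref{ass:signal} does not by itself give;
\item a direct nested comparison in which the Occam factors of shared parameters cancel, replacing the $e^{-Cs_0\log p}$ evidence bound.
\end{enumerate}
Failing that, the signal condition must be strengthened to $nb_n^2\gtrsim s_0\log p$.

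More basically, both steps treat the $m_\gamma$ as densities of the same data. With $L_n(\Theta,\gamma)$ as defined in Section~\ref{sec:bayesian-distillation}, $m_\gamma$ is a density of $(Y,X_{S_\gamma})$ only. Moving a null coordinate $j$ into $S_\gamma$ therefore multiplies the likelihood by roughly $\prod_i\widehat q(X_{ij}\mid X_{iS_\gamma})$. This is a unit-dependent factor, typically of order $e^{\pm c'n}$, that swamps both $p^{-a}$ and any $\chi^2$ term. Every model needs a component-free factor $h_\gamma(x_{S_\gamma^c}\mid x_{S_\gamma})$ with its own prior before your likelihood-ratio and KL arguments apply.

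The paper's proof has your two-case structure but simply asserts both bounds. It sums a per-model $O_{P_0}(n^{-c_2m})$ over $p^m$ supersets as if that bound were uniform, and a single $e^{-c_1nb_n^2}$ over all underfitted models. So it does not supply these ingredients either.
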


The posterior inclusion probabilities separate relevant and irrelevant variables asymptotically. Relevant variables receive inclusion probability tending to one, while irrelevant variables receive inclusion probability tending to zero. This gives the distillation step a formal target: the selected subspace converges to the population subspace that drives subgroup and response heterogeneity.

The proof is given in Appendix~\ref{app:proof-selection}.

\subsection{Information Preservation After Distillation}
\label{subsec:information-preservation}

\begin{theorem}[Information preservation after distillation]
\label{thm:information}
Under Assumption~\ref{ass:irrelevance}, the Bayes classifier based on all variables \(X\) is identical to the Bayes classifier based only on \(X_{S_0}\). That is,
\[
\argmax_k \Prob(Z=k\mid X)
=
\argmax_k \Prob(Z=k\mid X_{S_0}).
\]
\end{theorem}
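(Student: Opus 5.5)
The plan is a direct Bayes-rule computation. We show that the posterior class probabilities given \(X\) and given \(X_{S_0}\) coincide as functions. Equality of the argmax sets then follows immediately. The only input needed is Assumption~\ref{ass:irrelevance}, together with the convention that densities are taken with respect to a common dominating measure.

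First, factor the joint density of \((X,Z)\) under \(P_0\) as
\[
p_0(x,Z=k)=\pi_k\,p_0(x_{S_0}\mid Z=k)\,p_0(x_{S_0^c}\mid x_{S_0},Z=k).
\]
By Assumption~\ref{ass:irrelevance}, the last factor equals \(p_0(x_{S_0^c}\mid x_{S_0})\) for every \(k\). Summing over \(k\) gives the marginal
\[
p_0(x)=p_0(x_{S_0^c}\mid x_{S_0})\sum_{\ell}\pi_\ell\, p_0(x_{S_0}\mid Z=\ell)=p_0(x_{S_0^c}\mid x_{S_0})\,p_0(x_{S_0}).
\]
Dividing the two displays, the common factor \(p_0(x_{S_0^c}\mid x_{S_0})\) cancels. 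This yields
\[
\Prob(Z=k\mid X=x)=\frac{\pi_k\,p_0(x_{S_0}\mid Z=k)}{p_0(x_{S_0})}=\Prob(Z=k\mid X_{S_0}=x_{S_0})
\]
for every \(k\). The same conclusion can be phrased as the conditional independence \(Z\perp X_{S_0^c}\mid X_{S_0}\), which is exactly what the assumption encodes.

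Finally, since the two vectors \((\Prob(Z=k\mid X))_{k}\) and \((\Prob(Z=k\mid X_{S_0}))_{k}\) agree, their argmax sets coincide. Ties are thus handled identically, provided the same tie-breaking rule is applied to both, e.g.\ the smallest index. Hence the Bayes classifiers are identical.

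The only point requiring care is that the cancellation is legitimate. We divide by \(p_0(x_{S_0^c}\mid x_{S_0})\), so we need \(p_0(x)>0\), equivalently \(p_0(x_{S_0^c}\mid x_{S_0})>0\). I would state the identity for \(P_0\)-almost every \(x\), noting that the null set where \(p_0(x)=0\) is irrelevant for a Bayes classifier. In the Gaussian model of Section~\ref{sec:setting} the densities are strictly positive, so the identity holds pointwise.
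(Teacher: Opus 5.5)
Your proof is correct and follows the paper's argument essentially verbatim: Bayes' rule, the factorization $p(X\mid Z=k)=p(X_{S_0}\mid Z=k)\,p(X_{S_0^c}\mid X_{S_0},Z=k)$, Assumption~\ref{ass:irrelevance}, and cancellation of the $k$-free factor $p(X_{S_0^c}\mid X_{S_0})$. Your remarks on almost-everywhere validity and on applying a common tie-breaking rule are small refinements that the paper leaves implicit; strictly, $p_0(x)>0$ requires $p_0(x_{S_0})>0$ as well as positivity of the conditional factor, but this does not affect the argument.
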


The relevant subspace is sufficient for optimal subgroup classification. Discarding \(X_{S_0^c}\) does not change the population Bayes rule because those variables contain no additional class information once \(X_{S_0}\) is known. This property is central to the paper's construction: distillation is not merely a computational reduction, but a statistically lossless reduction for the clustering target under the stated model.

The proof is given in Appendix~\ref{app:proof-information}.

\subsection{Clustering Consistency}
\label{subsec:clustering-consistency}

\begin{theorem}[Clustering consistency]
\label{thm:clustering}
Assume Theorem~\ref{thm:selection} holds and Assumptions~\ref{ass:irrelevance}, \ref{ass:eigenvalue}, and \ref{ass:separation} hold. Let \(\widehat Z_i\) be the label obtained by fitting the mixture model on \(X_{i\widehat S}\). Then, up to a permutation of component labels,
\[
\frac1n
\sum_{i=1}^n
\ind(\widehat Z_i\ne Z_i)
=
O_P\{\exp(-c\Delta_n^2)\}+o_P(1)
\]
for some constant \(c>0\).

If, in addition,
\[
\Delta_n^2\ge C\log n
\]
for a sufficiently large constant \(C\), then
\[
\Prob\left(
\min_{\sigma\in\mathfrak S_K}
\max_{1\le i\le n}
\ind\{\widehat Z_i\ne \sigma(Z_i)\}=0
\right)
\to1,
\]
where \(\mathfrak S_K\) is the set of all permutations of \(\{1,\ldots,K\}\).
\end{theorem}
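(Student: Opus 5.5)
We reduce to the oracle setting and then run a standard Gaussian-mixture misclassification argument. By Theorem~\ref{thm:selection}, the event \(\mathcal E_n=\{\widehat S=S_0\}\) has \(P_0(\mathcal E_n)\to1\). Its complement contributes at most \(\ind(\mathcal E_n^c)\), which is \(o_P(1)\) for the misclassification proportion and \(o(1)\) for the exact-recovery probability. We therefore work on \(\mathcal E_n\) throughout. There the fitted mixture is the one fitted on \(X_{iS_0}\), and by Theorem~\ref{thm:information} (which uses Assumption~\ref{ass:irrelevance}) the oracle target is the Bayes rule \(\argmax_k \pi_k\varphi(X_{iS_0};\mu_{k0},\Sigma_{k0})\) on \(S_0\). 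Discarding \(S_0^c\) loses nothing.

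\textbf{Oracle Bayes-rule error.} We first bound the error of the oracle Bayes rule with true parameters. Take \(Z_i=k\) and write \(X_{iS_0}=\mu_{k0}+\Sigma_{k0}^{1/2}\varepsilon_i\) with \(\varepsilon_i\sim N(0,I_{s_0})\). For each \(\ell\ne k\) we compare the log-likelihood ratio
\[
\log\frac{\pi_k\varphi(X;\mu_{k0},\Sigma_{k0})}{\pi_\ell\varphi(X;\mu_{\ell0},\Sigma_{\ell0})}.
\]
Its mean is a Kullback--Leibler divergence plus \(\log(\pi_k/\pi_\ell)\). By Assumption~\ref{ass:eigenvalue}, this mean is at least \(c'\Delta_n^2 - O(1)\). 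Its fluctuation is a linear term plus a quadratic form in \(\varepsilon_i\). The linear term is sub-Gaussian with variance proxy \(O(\Delta_n^2)\). In the homoscedastic case, which the definition of \(\Delta_n\) with a representative \(\Sigma_0\) suggests treating first, the quadratic term cancels exactly. A Gaussian tail bound then gives a per-observation error of at most \((K-1)\exp(-c\Delta_n^2)\). In the heteroscedastic case, the quadratic part is handled with Hanson--Wright, using the eigenvalue bounds. Summing over observations and applying Markov's inequality gives an average oracle error of \(O_P\{\exp(-c\Delta_n^2)\}\). A union bound over \(i\le n\) gives a probability of any oracle error of at most \(nK\exp(-c\Delta_n^2)\), which tends to \(0\) once \(\Delta_n^2\ge C\log n\) with \(Cc>1\).

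\textbf{Plug-in step (main obstacle).} Next we pass from the oracle rule to the plug-in rule using \(\widehat\Psi\). We need the fitted parameters to be consistent up to a label permutation \(\sigma\): \(\max_k(|\widehat\pi_{\sigma(k)}-\pi_k|+\|\widehat\mu_{\sigma(k)}-\mu_{k0}\|+\|\widehat\Sigma_{\sigma(k)}-\Sigma_{k0}\|)=o_P(1)\), at a rate that is \(o(1)\) relative to the margin. This is where Assumption~\ref{ass:identifiability} and the effective dimension \(s_0\) enter. On \(\mathcal E_n\) the problem is \(s_0\)-dimensional with \(s_0\log p=o(n)\). We would obtain a rate of order \(\sqrt{s_0\log n/n}\), either from the MLE/EM local convergence theory for well-separated Gaussian mixtures or from a posterior-contraction argument under Assumption~\ref{ass:prior}. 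With this rate, the perturbation of each log-likelihood-ratio discriminant is at most \(O_P(\sqrt{s_0\log n/n})\,(1+\|\varepsilon_i\|+\Delta_n)\) uniformly in \(i\). The bound holds uniformly because \(\max_i\|\varepsilon_i\|=O_P(\sqrt{s_0}+\sqrt{\log n})\). Any observation misclassified by the plug-in rule but not by the oracle rule must therefore have its oracle margin within this perturbation of zero. Because the margin is of order \(\Delta_n^2\), such observations are either already oracle errors or belong to a set of vanishing proportion. The vanishing-proportion set is the source of the \(o_P(1)\) term in the statement. In the exact-recovery regime, the uniform perturbation is \(o(\Delta_n^2)\), so it does not destroy the margin bound used in the union bound. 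This gives zero errors with probability tending to one.

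The hardest part is this parameter-consistency step. Mixture likelihoods have spurious local maxima and degenerate covariance solutions. To handle them, we would restrict to the eigenvalue-constrained parameter set from Assumption~\ref{ass:eigenvalue}. We would also use the fact that large separation makes the likelihood locally strongly concave around the truth, so that a suitably initialized EM, or the posterior mode, lies in the contraction neighbourhood.
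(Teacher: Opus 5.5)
Your proof has the same skeleton as the paper's: restrict to $\{\widehat S=S_0\}$ via Theorem~\ref{thm:selection}, use Theorem~\ref{thm:information} to identify the target, bound the oracle Bayes error by Gaussian tails with a union over component pairs, pass to the plug-in rule through consistency of the fitted mixture, and union-bound over $i$. The paper simply asserts that consistency step, so you are right to call it the real obstacle.

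The genuine difference is the plug-in step. The paper turns $o_P(1)$ parameter consistency into a per-observation bound $\Prob(\widehat Z_i\ne Z_i)\le C_1e^{-c_1\Delta_n^2}+o(1)$. It then union-bounds over $n$ observations, which silently drops an $n\cdot o(1)$ term, so its exact-recovery argument is incomplete. You instead compare the discriminant perturbation with the oracle margin uniformly in $i$. That is the right device, because $\widehat\Psi$ depends on all the data and control must hold simultaneously over observations.

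There are two caveats. First, your device needs a quantitative rate $\delta_n$ for $\widehat\Psi$, and your perturbation bound is too optimistic. The fitted model has component-specific $\widehat\Sigma_k$, so the discriminants are quadratic in $X_i$. From operator-norm consistency alone, the perturbation is of order $\delta_n(1+\Delta_n+\|\varepsilon_i\|)^2+\delta_n s_0$, where the last term comes from the log-determinants. Uniformly in $i$ this is $O_P\{\delta_n(\Delta_n^2+s_0+\log n)\}$. Making this bound $o(\Delta_n^2)$ requires $\delta_n s_0=o(\Delta_n^2)$ as well as $\delta_n\to0$. This holds automatically for bounded $s_0$, but it is not implied by $s_0\log p=o(n)$. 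For example, with your own rate $\delta_n\asymp\sqrt{s_0\log n/n}$, $s_0=n^{1/2}$ and $\Delta_n^2=C\log n$, the condition fails. Your vanishing-proportion claim in the first part relies on the same condition.

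Second, a gap you share with the paper: Theorem~\ref{thm:selection} gives $\widehat S(t)=S_0$ only for fixed $t$, whereas $\widehat S=\widehat S(\widehat t)$ uses a data-driven threshold. $\BFDR$ at the successive thresholds is the running mean of the sorted values of $1-w_j$, so the infimum rule keeps the largest top-ranked set whose running mean is at most $q$. When $w_j\approx1$ on $S_0$ and $w_j\approx0$ off it, that set still contains about $qs_0/(1-q)$ null coordinates, hence at least one once $s_0>(1-q)/q$. So $\Prob(\widehat S=S_0)\to1$ needs either a fixed threshold or a separate argument that a few extra null coordinates do not spoil the mixture fit.
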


Once the distilled subspace recovers the relevant variables, clustering in that subspace attains the usual Gaussian-mixture classification behavior. The error rate is governed by the separation \(\Delta_n\), not by the ambient dimension \(p\). This supports the two-stage design: posterior distillation controls the high-dimensional feature problem, while mixture clustering then operates in the statistically relevant coordinates.

The proof is given in Appendix~\ref{app:proof-clustering}.

\subsection{Posterior Contraction for the Response Model}
\label{subsec:posterior-contraction}

\begin{theorem}[Posterior contraction for the response model]
\label{thm:contraction}
Suppose the response model is a regular exponential-family regression model and Assumptions~\ref{ass:sparse-truth}, \ref{ass:identifiability}, \ref{ass:signal}, and \ref{ass:prior} hold. Then, after relabeling the mixture components,
\[
\Pi\left(
\max_{1\le k\le K}
\|\beta_{k,S_0}-\beta_{k0,S_0}\|_2
>
M\sqrt{\frac{s_0\log p}{n}}
\ \middle|\ D_n
\right)
\to0
\]
in \(P_0\)-probability for a sufficiently large constant \(M\). The same rate holds for \(\mu_{k,S_0}\) under the Gaussian covariate model.
\end{theorem}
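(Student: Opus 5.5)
The plan is to reduce the statement to a fixed-dimensional contraction problem on the true model. We then verify the three ingredients of the general posterior contraction theorem (Ghosal--Ghosh--van der Vaart type) for mixture likelihoods restricted to \(S_0\). First, by Theorem~\ref{thm:selection}, \(\Pi(\gamma\ne\gamma_0\mid D_n)\to0\) in \(P_0\)-probability. It therefore suffices to bound the posterior probability of the event in the statement intersected with \(\{\gamma=\gamma_0\}\). That quantity is bounded by the conditional posterior \(\Pi(\cdot\mid\gamma_0,D_n)\), which is the posterior for a \(K\)-component Gaussian-covariate, exponential-family-response mixture with parameter dimension \(d_n\asymp K(s_0+s_0^2)\) for \(\Sigma_k\) plus \(K(s_0+2)\) for the response. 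The statement only concerns \(\beta_{k,S_0}\) and \(\mu_{k,S_0}\), so in the denominator we work with the full joint likelihood \(\prod_i\sum_k\pi_kq_kr_k\) on \(S_0\).

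\textbf{Step 1: prior mass.} Take the Kullback--Leibler neighbourhood
\[
B_n=\{\Theta:\KL(p_0,p_\Theta)\le\epsilon_n^2,\ V(p_0,p_\Theta)\le\epsilon_n^2\},\qquad \epsilon_n^2=s_0\log p/n.
\]
By Assumption~\ref{ass:eigenvalue}, the mixture log-density is locally Lipschitz in \((\pi,\mu,\Sigma,\alpha,\beta,\phi)\) near the truth, with KL controlled by squared Euclidean distance. Assumption~\ref{ass:prior} (continuous positive densities) together with the Dirichlet/Normal--Wishart forms gives
\[
\Pi(B_n\mid\gamma_0)\ge \exp(-C d_n\log(1/\epsilon_n)).
\]
The prior weight \(\Prob(\gamma_0)\ge \exp(-(a+1)s_0\log p)\) enters at the same order. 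This yields the usual evidence lower bound: the denominator exceeds \(\exp(-C'n\epsilon_n^2)\) with probability tending to one.

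\textbf{Step 2: tests and sieve.} Build tests for Hellinger separation of the mixture densities on a sieve with bounded parameters and eigenvalues in \([c/2,2C]\). The entropy of this sieve is \(\lesssim d_n\log(n)\), and the remaining mass is exponentially small by the Gaussian/Wishart tails. Contraction in Hellinger distance \(h(p_\Theta,p_0)\lesssim\epsilon_n\) then follows.

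\textbf{Step 3 (main obstacle): converting density contraction into parameter contraction.} We must show that, up to a label permutation,
\[
\max_k\bigl(\|\beta_k-\beta_{k0}\|_2+\|\mu_k-\mu_{k0}\|_2\bigr)\lesssim h(p_\Theta,p_0)
\]
in a neighbourhood of the truth. Assumption~\ref{ass:identifiability} only gives qualitative identifiability, so I would invoke strong identifiability. Because \(K\) is the true number of components and \(\pi_k>0\) (no overfitting), the Fisher information of the exactly specified mixture is nonsingular. I would argue this via linear independence of the functions \(\partial_\theta[q_kr_k]\) across distinct components, which holds for Gaussian \(\times\) regular exponential-family kernels with distinct parameters. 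A local inverse bound (Ho--Nguyen style, first order) then gives a linear comparison with constants uniform over the sieve. A global argument, using compactness plus identifiability, rules out far-away parameters with small Hellinger distance.

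The delicate point is that \(d_n\) grows with \(s_0\). The inverse constant must therefore be controlled uniformly in dimension using Assumptions~\ref{ass:eigenvalue} and \ref{ass:separation}, for example by bounding the Fisher information below by \(c\,\min_k\pi_k\) times the complete-data information minus an overlap term that is exponentially small in \(\Delta_n^2\). If that works, combining Steps 1--3 with the \(\gamma\)-reduction gives the rate \(M\sqrt{s_0\log p/n}\) for both \(\beta_{k,S_0}\) and \(\mu_{k,S_0}\). The logarithmic factor from the covariance block's \(s_0^2\) entropy is absorbed only if \(s_0\) is treated as bounded or \(\Sigma_k\) is handled as a nuisance at a slower rate; I would state this explicitly.
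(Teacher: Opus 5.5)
Your architecture is the paper's. You reduce to \(\gamma_0\) via Theorem~\ref{thm:selection} (your reduction is fine), run a prior-mass-plus-testing argument inside the true model, and then use a curvature step to reach Euclidean distance. The paper compresses your Steps 2--3 into one assertion: because the Fisher information has eigenvalues bounded away from \(0\) and \(\infty\), the likelihood ratio is at most \(\exp(-cnr_n^2)\) outside the Euclidean ball of radius \(r_n\).

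The genuine gap is your Step 3, which you leave conditional (``if that works''). It cannot be closed from this theorem's hypotheses alone.
\begin{itemize}
\item Assumptions~\ref{ass:sparse-truth}, \ref{ass:identifiability}, \ref{ass:signal} and~\ref{ass:prior} give only qualitative identifiability. They even permit the true components to approach one another as \(b_n\to0\).
\item Your remedies import Assumptions~\ref{ass:eigenvalue} and~\ref{ass:separation}, which are not hypotheses here. Some such condition is genuinely needed: without an eigenvalue bound the \(\mu_{k,S_0}\) rate already fails, since the posterior mean-squared spread of \(\mu_k\) is of order \(\tr(\Sigma_{k0})/n_k\).
\item The ``compactness plus identifiability'' global argument gives constants that depend on the dimension.
\item A Fisher-information lower bound at the truth does not give an inverse bound on a whole \(\epsilon_n\)-neighbourhood when \(s_0\to\infty\). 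You would also need uniform control of the remainder of the quadratic expansion on that neighbourhood.
\end{itemize}
The paper does not supply any of this either; it is absorbed into the phrase ``under the regularity assumptions.'' The honest repair is to state a uniform local-curvature (inverse-bound) condition on the true model as an explicit hypothesis. In the bounded-\(s_0\) case, first-order identifiability of the Gaussian\(\times\)exponential-family kernel combined with the exact-fitted inverse bound is the right tool, provided the truth stays uniformly identifiable.

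Your dimension count is correct, and it exposes something the paper's proof passes over. With unknown \(\Sigma_k\) the true model has \(\asymp Ks_0^2\) free parameters, not the ``order \(Ks_0\)'' the paper uses. The requirement \(\Pi(B_n\mid\gamma_0)\ge\exp(-Cn\epsilon_n^2)\) with \(\epsilon_n^2=s_0\log p/n\) then needs roughly \(s_0\log n\lesssim\log p\). So a joint-contraction argument cannot deliver the stated rate for \(\beta_{k,S_0}\) with growing \(s_0\) unless the covariances are known or structured, or \(\beta\) gets a separate block-wise argument. Treating \(\Sigma_k\) as a nuisance at a slower rate is not enough on its own, because the mixture information couples \(\beta\) to \((\pi,\mu,\Sigma)\) through the membership probabilities. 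As written, then, your proposal proves the statement at most for bounded \(s_0\).

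Finally, Step 2 contains an outright, though fixable, error. Under fixed Dirichlet and Normal--Wishart priors, the prior mass outside a sieve with fixed parameter bounds and eigenvalues in \([c/2,2C]\) is bounded away from zero. It is not of order \(\exp(-Cn\epsilon_n^2)\), as the contraction theorem requires. The sieve must grow with \(n\), with coverings adapted to the scale parameters. Your global identifiability step must then work on that growing, non-compact set.
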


After model selection, the subgroup-specific response parameters concentrate at the standard sparse high-dimensional rate. The rate depends on \(s_0\log p\), rather than on the full dimension \(p\). This result links clustering to inference: the distilled variables are not only useful for label recovery, but also support stable estimation of subgroup-specific response relationships.

The proof is given in Appendix~\ref{app:proof-contraction}.

\subsection{Validity of Subgroup Conditional-Independence Tests}
\label{subsec:ci-tests}

\begin{theorem}[Validity of subgroup conditional-independence tests]
\label{thm:ci-tests}
Assume the subgroup-specific covariate distribution is Gaussian. For subgroup \(k\), let
\[
\Omega_{k0}=\Sigma_{k0}^{-1}.
\]
Assume \(\Omega_{k0}\) is sparse, with maximum node degree \(d_k\), and
\[
d_k^2\frac{\log s_0}{n_k}\to0,
\]
where
\[
n_k=\#\{i:Z_i=k\}.
\]
Assume also that Theorem~\ref{thm:clustering} gives exact recovery of the subgroup labels with probability tending to one.

Then, for every null edge \((a,b)\) satisfying
\[
(\Omega_{k0})_{ab}=0,
\]
the debiased statistic satisfies
\[
T_{k,ab}\rightsquigarrow N(0,1).
\]
Consequently, if the Benjamini-Hochberg procedure is applied within subgroup \(k\), the false-discovery rate satisfies
\[
\FDR_k\le q+o(1).
\]
Moreover, if for a nonzero edge
\[
\frac{
\sqrt{n_k}|(\Omega_{k0})_{ab}|
}{
\sigma_{k,ab}
}
\to\infty,
\]
then the probability of detecting that edge tends to one.
\end{theorem}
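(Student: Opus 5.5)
The plan is to reduce the problem to the oracle setting, where both the subgroup labels and the relevant coordinates are known. The debiased graphical lasso theory of \cite{jankova2015confidence} then gives the distributional statements, and standard Benjamini--Hochberg arguments give the FDR and power claims. Let $\mathcal E_n$ be the event that $\widehat S=S_0$ (Theorem~\ref{thm:selection}) and that $\widehat Z_i=\sigma(Z_i)$ for all $i$ and some permutation $\sigma$ (the exact-recovery part of Theorem~\ref{thm:clustering}). By assumption $\Prob(\mathcal E_n)\to1$. On $\mathcal E_n$, after relabelling, $\widehat{\mathcal I}_k=\{i:Z_i=k\}$, $\widehat n_k=n_k$, and $\widehat\Sigma_k$ coincides with the oracle sample covariance of $X_{iS_0}$ over the true subgroup. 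Every statistic $T_{k,ab}$ therefore equals its oracle counterpart $T^{\mathrm{or}}_{k,ab}$ on $\mathcal E_n$. For any Borel set $B$,
\[
|\Prob(T_{k,ab}\in B)-\Prob(T^{\mathrm{or}}_{k,ab}\in B)|\le \Prob(\mathcal E_n^c)\to0,
\]
so weak limits and asymptotic FDR and power statements transfer from the oracle to the actual procedure.

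In the oracle problem I condition on the labels $(Z_1,\ldots,Z_n)$. Given the labels, the observations $X_{iS_0}$ with $Z_i=k$ are i.i.d.\ $N(\mu_{k0},\Sigma_{k0})$, and $n_k/n\to\pi_k>0$ almost surely, so $n_k\to\infty$. Assumption~\ref{ass:eigenvalue} bounds the eigenvalues of $\Sigma_{k0}$ and $\Omega_{k0}$. The sparsity condition $d_k^2\log s_0/n_k\to0$ is exactly the rate condition in \cite{jankova2015confidence}. With tuning $\lambda_k\asymp\sqrt{\log s_0/n_k}$, their decomposition reads
\[
\sqrt{n_k}\bigl(\widehat\Omega^d_{k,ab}-(\Omega_{k0})_{ab}\bigr)
=-\sqrt{n_k}\bigl(\Omega_{k0}(\widehat\Sigma_k-\Sigma_{k0})\Omega_{k0}\bigr)_{ab}+R_{ab},
\]
with $\max_{a,b}|R_{ab}|=O_P(d_k\log s_0/\sqrt{n_k})=o_P(1)$. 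The leading term is an average of i.i.d.\ mean-zero variables with variance $\sigma^2_{k,ab}=(\Omega_{k0})_{aa}(\Omega_{k0})_{bb}+(\Omega_{k0})_{ab}^2$, so the CLT applies. The plug-in estimator $\widehat\sigma_{k,ab}$ is consistent, and Slutsky gives $T_{k,ab}\rightsquigarrow N(0,1)$ under $H_{0,k,ab}$. The conditional limit does not depend on the labels, so it also holds unconditionally. The sample mean replaces $\mu_{k0}$ only at cost $O_P(1/n_k)$ in $\widehat\Sigma_k$, which is negligible.

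For the FDR claim, $s_0$ may grow, so the number of tests $m_k=s_0(s_0-1)/2$ can grow, and marginal convergence alone is not enough. I would therefore strengthen the previous step to uniform control, with
\[
\max_{(a,b)\,\text{null}}\sup_x\bigl|\Prob(T_{k,ab}\le x)-\Phi(x)\bigr|\to0,
\]
together with a moderate-deviation bound over the range $|x|\le\sqrt{2\log m_k}$. The bound follows from the uniform remainder bound above and Gaussian (sub-exponential) tails of the summands. The null $p$-values are then asymptotically uniform uniformly in $(a,b)$, with relative error $o(1)$ in the tail. I would then follow the dependence-robust Benjamini--Hochberg argument in the style of Liu's FDR control for Gaussian graphical models. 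Approximate uniformity and the weak correlation among the $T_{k,ab}$ (controlled through the sparsity of $\Omega_{k0}$) imply that $\sum_{\text{null}}\ind(|T_{k,ab}|\ge t)/(m_{k,0}\,2\bar\Phi(t))\to1$ uniformly in $t$ up to the BH threshold, which yields $\FDR_k\le q+o(1)$. If $s_0$ is bounded, the number of tests is finite and the result follows directly from joint asymptotic normality and the BH inequality under PRDS-type Gaussian dependence, or from the Benjamini--Yekutieli argument. I expect this uniform, dependence-robust step to be the main obstacle.

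Power is the simplest part. For a nonzero edge, $T_{k,ab}=\sqrt{n_k}(\Omega_{k0})_{ab}/\widehat\sigma_{k,ab}+O_P(1)$, which diverges in absolute value under the stated condition. Its $p$-value therefore tends to zero in probability faster than any fixed BH cutoff $q/m_k$ when $m_k$ is bounded. For growing $m_k$ one needs divergence faster than $\sqrt{2\log m_k}$; I would note that the condition is interpreted in that sense. Rejection then follows with probability tending to one, and transferring back through $\mathcal E_n$ completes the argument.
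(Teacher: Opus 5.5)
Your proposal takes the same route as the paper: reduce to the true labels via exact recovery, expand the de-sparsified estimator, apply the CLT and Slutsky, run BH, and use divergence for power. Your coupling bound $|\Prob(T_{k,ab}\in B)-\Prob(T^{\mathrm{or}}_{k,ab}\in B)|\le\Prob(\mathcal E_n^c)$ is a cleaner reduction than the paper's. The paper ``conditions on'' the recovery event and then treats the subgroup sample as i.i.d.\ Gaussian, even though that event depends on the $X_i$.

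\textbf{The debiasing remainder.} You identify $d_k^2\log s_0/n_k\to0$ with the Jankov\'a--van de Geer rate condition and conclude $\max_{a,b}|R_{ab}|=O_P(d_k\log s_0/\sqrt{n_k})=o_P(1)$. But $d_k\log s_0/\sqrt{n_k}=(d_k^2\log s_0/n_k)^{1/2}(\log s_0)^{1/2}$, and the hypothesis controls only the first factor. Since $s_0$ may grow, take $s_0=\sqrt n$ and $d_k=\sqrt n/(\log n)^{3/4}$: the hypothesis holds, yet $d_k\log s_0/\sqrt{n_k}\asymp(\log n)^{1/4}\to\infty$. The stated condition is the one for elementwise consistency. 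Asymptotic normality needs $d_k=o(\sqrt{n_k}/\log s_0)$. The graphical-lasso version also needs an irrepresentability condition; nodewise regression removes that, but not the rate. The minimax rate of order $\max\{n_k^{-1/2},d_k\log s_0/n_k\}$ for a single precision entry suggests this is not merely a proof artifact. The paper's proof makes the same leap (``elementwise error small enough for debiasing''). You should therefore either restrict to bounded $s_0$, where your CLT step is fine, or strengthen the hypothesis.

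\textbf{The event $\widehat S=S_0$.} $\Prob(\widehat S=S_0)\to1$ is not what Theorem~\ref{thm:selection} delivers, since it covers fixed thresholds only. The BFDR rule keeps admitting irrelevant coordinates as long as the average of $1-w_j$ over the selected set stays below $q$. So for fixed $s_0$ with $q(s_0+1)>1$, you get $\widehat S\supsetneq S_0$ with probability tending to one. The tests then target the precision matrix of $X_{\widehat S}$ rather than $\Omega_{k0}$. This event must be assumed, or enforced, e.g.\ with a fixed threshold.

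\textbf{FDR and power.} You correctly spot what the paper skips: it derives BH control from asymptotic uniformity of each null $p$-value alone. Your repair, however, is not supported by the hypotheses either.
Uniform moderate deviations over $|x|\le\sqrt{2\log m_k}$ require the remainder to be $o_P((\log s_0)^{-1/2})$, i.e.\ $d_k(\log s_0)^{3/2}/\sqrt{n_k}\to0$.
A Liu-type argument requires explicit weak-dependence conditions on the null statistics. For plain BH rather than a truncated cutoff, it also needs control keeping the data-driven threshold in the moderate-deviation range. None of this is in the statement.

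Your bounded-$s_0$ fallback is not closed either. PRDS is not available for two-sided Gaussian $p$-values under an arbitrary correlation structure; the standard sufficient conditions are Karlin--Rinott-type sign conditions. Benjamini--Yekutieli gives only level $q\sum_{i\le m_k}1/i$, not $q$.

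As written, your argument proves the null limit for bounded $s_0$ only. The FDR statement, and the power statement for growing $m_k$ (which, as you note, needs divergence faster than $\sqrt{2\log m_k}$), need additional hypotheses that should be stated rather than derived.
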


The subgroup-specific graphical analysis remains valid after clustering, provided the labels are recovered with high probability and the precision matrices are sufficiently sparse. The result separates two sources of uncertainty: label uncertainty is controlled by the clustering theorem, and edge-level uncertainty is handled by debiased precision-matrix inference. This gives a formal basis for using the inferred subgroups to study conditional dependence structures among the distilled variables.

The proof is given in Appendix~\ref{app:proof-ci-tests}.

\section{Practical Methodology}
\label{sec:methodology}

\subsection{Step 1: Preprocessing}
\label{subsec:preprocessing}

Each covariate is centered and scaled. For biological data, variables with extreme missingness or near-zero variance are removed before modeling. Missing values may be handled by model-based imputation, but the imputation model should be fit inside the training folds when cross-validation is used.

\subsection{Step 2: Bayesian Variable Distillation}
\label{subsec:variable-distillation}

Fit the Bayesian mixture-response model with sparse inclusion vector \(\gamma\). Compute posterior inclusion probabilities
\[
w_j=\Pi(\gamma_j=1\mid D_n).
\]
Select
\[
\widehat S=\{j:w_j\ge \widehat t\}
\]
using posterior false-discovery control.

This step is not a marginal screening rule. A variable is selected because the posterior evidence supports its role in subgroup separation, subgroup-specific response behavior, or both.

\subsection{Step 3: Refit the Mixture on the Distilled Subspace}
\label{subsec:refit-mixture}

Fit a finite mixture model to
\[
\{X_{i\widehat S}:1\le i\le n\}.
\]
Compute posterior membership weights \(\widehat r_{ik}\) and assign
\[
\widehat Z_i=\argmax_k\widehat r_{ik}.
\]

Uncertainty should be reported through the entropy
\[
H_i
=
-\sum_{k=1}^K
\widehat r_{ik}\log \widehat r_{ik}.
\]
Subjects with large \(H_i\) are weakly assigned and should not be overinterpreted.

\subsection{Step 4: Estimate Subgroup-Specific Response Models}
\label{subsec:response-models}

Within each inferred subgroup, fit
\[
Y_i\mid X_{i\widehat S},\widehat Z_i=k.
\]
This provides subgroup-specific associations between the selected covariates and the response. These estimates are useful for interpretation, prediction, and treatment-effect heterogeneity analysis.

\subsection{Step 5: Conditional-Independence Diagnostics}
\label{subsec:practical-ci}

For each subgroup \(k\), estimate the precision matrix \(\Omega_k\) of \(X_{\widehat S}\). Test
\[
H_{0,k,ab}:(\Omega_k)_{ab}=0
\]
for all pairs \(a\ne b\). Adjust the resulting \(p\)-values using Benjamini-Hochberg within each subgroup.

The output is a collection of graphs
\[
\widehat G_1,\ldots,\widehat G_K,
\]
where each graph describes conditional dependence among selected variables inside one subgroup.

\subsection{Step 6: Model Checking}
\label{subsec:model-checking}

The following checks should be reported:
\begin{enumerate}[label=\arabic*.]
    \item Stability of \(\widehat S\) under resampling.
    \item Stability of cluster assignments under resampling.
    \item Posterior uncertainty of \(K\), if \(K\) is not fixed.
    \item Predictive performance for \(Y\).
    \item Separation of clusters in the selected subspace.
    \item Sensitivity to the posterior false-discovery level \(q\).
    \item Sensitivity to the prior on model size.
\end{enumerate}

\section{Simulation Studies}
\label{sec:simulation}

A rigorous simulation study should vary the following factors:
\[
n\in\{100,300,500\},
\qquad
p\in\{500,2000,10000\},
\]
and
\[
s_0\in\{5,10,25\},
\qquad
K\in\{2,3,5\}.
\]

Generate latent labels
\[
Z_i\sim\operatorname{Categorical}(\pi_1,\ldots,\pi_K).
\]
For \(j\in S_0\), generate subgroup-dependent covariates. For \(j\notin S_0\), generate noise covariates that may be correlated with \(S_0\), but do not depend on \(Z_i\) after conditioning on \(X_{iS_0}\).

Generate responses from
\[
Y_i
=
\alpha_{Z_i}
+
X_{iS_0}^\top\beta_{Z_i}
+
\varepsilon_i,
\qquad
\varepsilon_i\sim N(0,\sigma^2),
\]
or from a logistic model for binary outcomes.

Report:
\[
\text{variable-selection TPR},
\qquad
\text{variable-selection FDR},
\qquad
\text{adjusted Rand index},
\]
and
\[
\text{misclassification rate},
\qquad
\text{prediction error},
\qquad
\text{graph edge FDR},
\qquad
\text{graph edge power}.
\]

The main comparisons should include full-dimensional Gaussian mixtures, sparse \(K\)-means, penalized mixture models, spectral clustering on all variables, and clustering after marginal screening.

The central empirical claim should be modest and testable: when the relevant subgroup information is sparse, Bayesian distillation should reduce noise accumulation and improve both variable recovery and cluster recovery relative to methods that treat all variables equally.

For the details, we refer to the following diagrams. 

\begin{figure}
    \centering
    \includegraphics[width=0.7\linewidth]{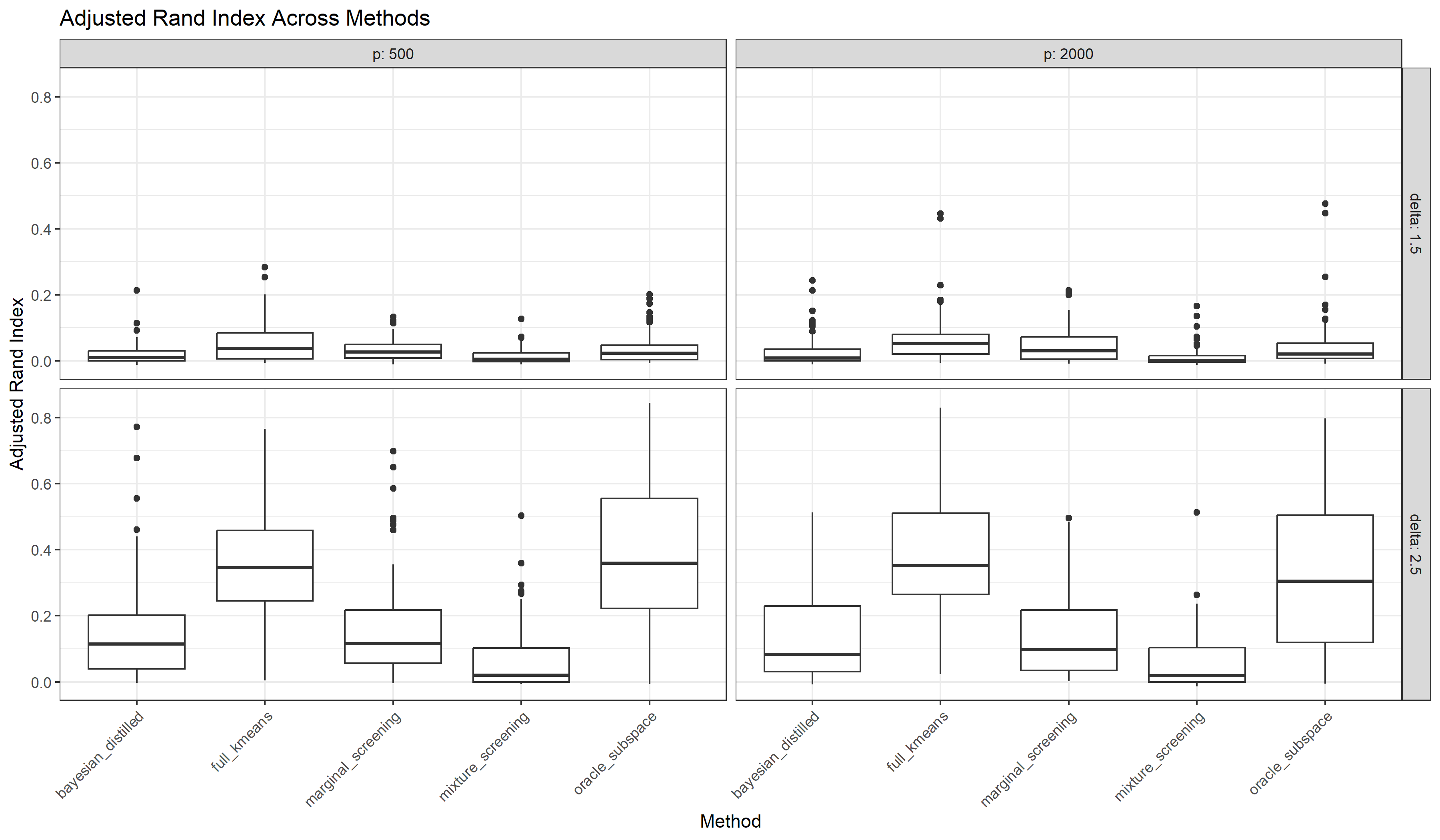}
    \caption{Adjusted rank index across methods}
    \label{fig:placeholder}
\end{figure}

\begin{figure}
    \centering
    \includegraphics[width=0.7\linewidth]{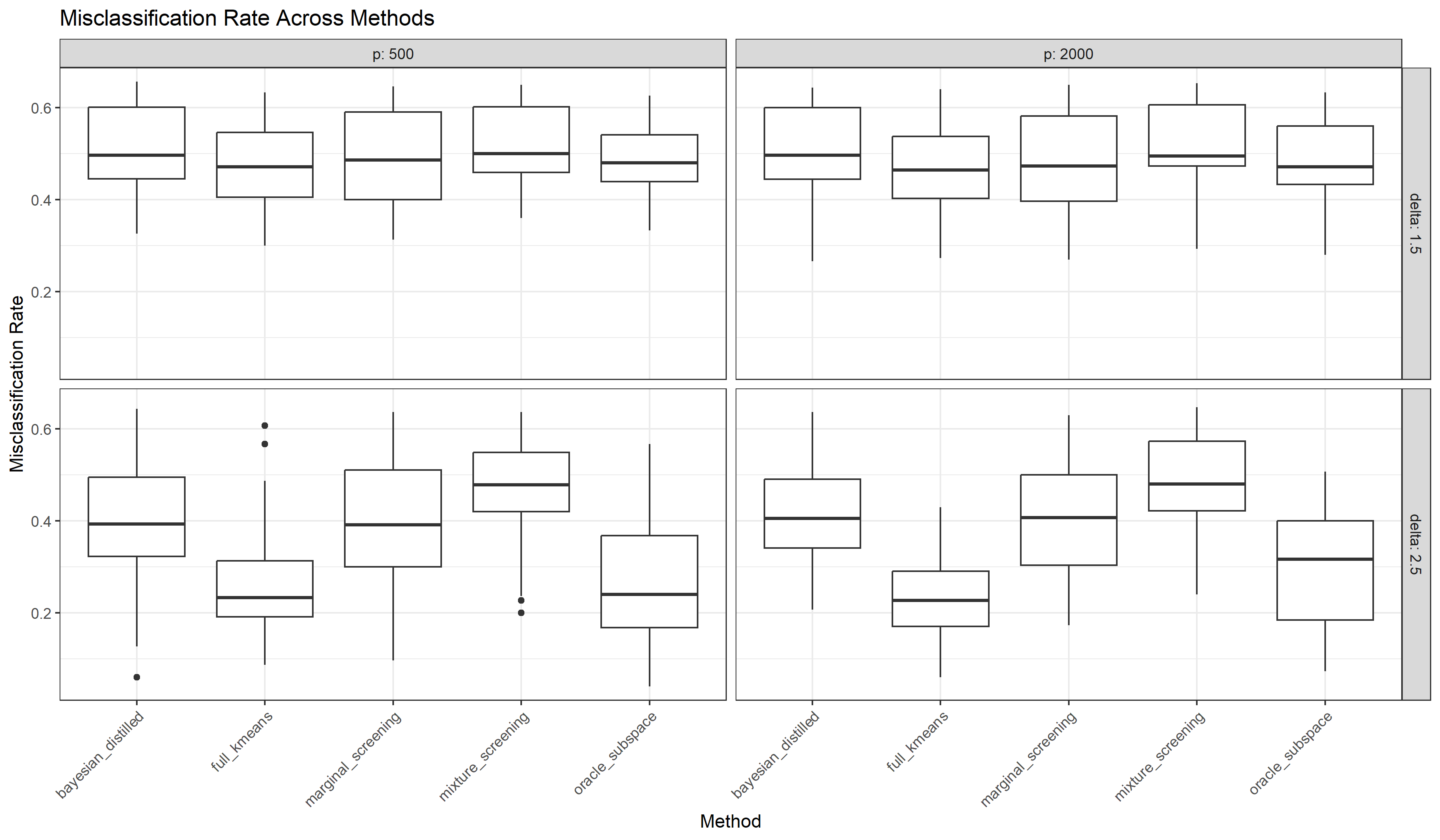}
    \caption{Misclassification rates across methods}
    \label{fig:placeholder}
\end{figure}

\begin{figure}
    \centering
    \includegraphics[width=0.7\linewidth]{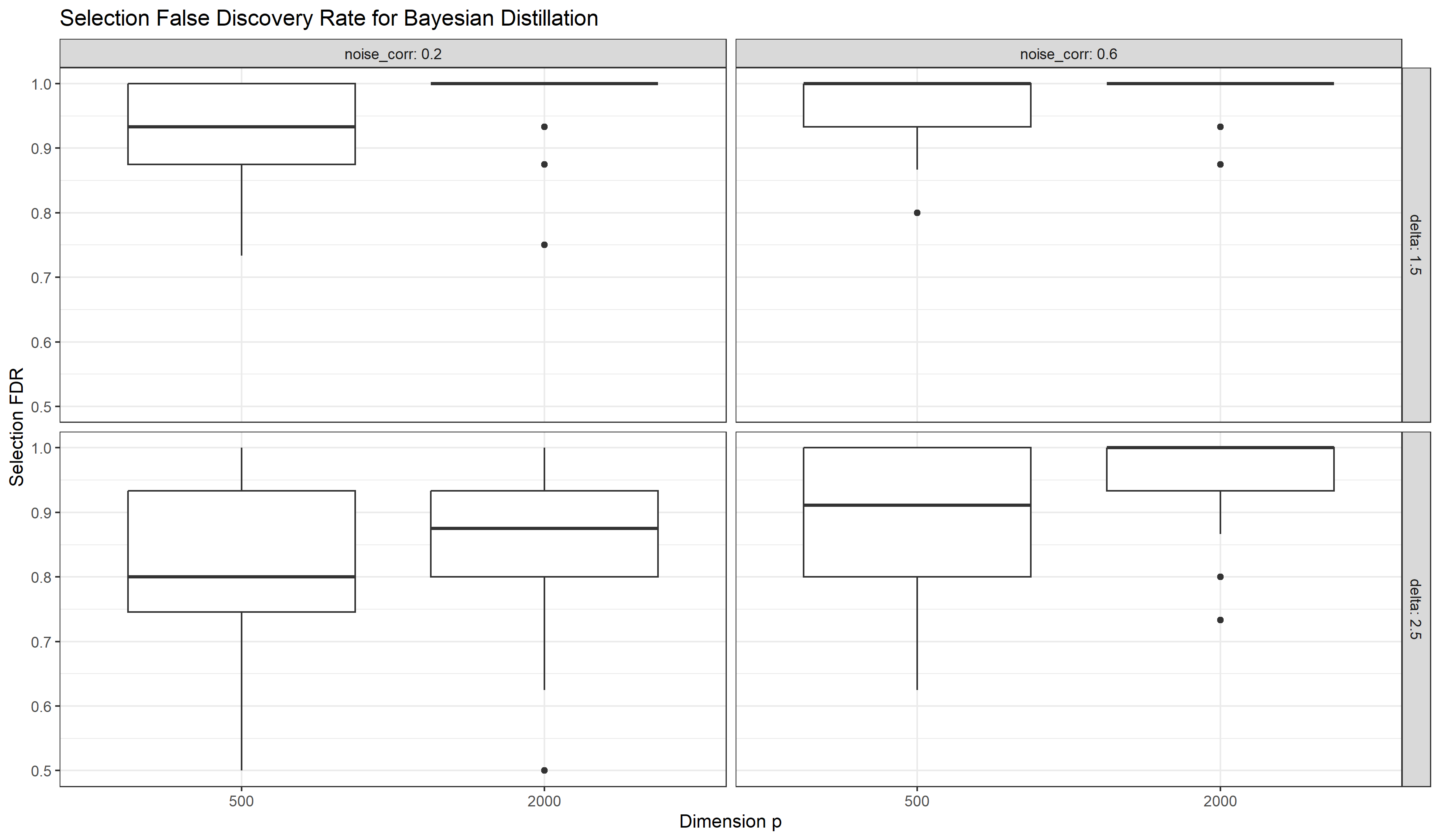}
    \caption{Selection false discovery rate for Bayesian Distillation}
    \label{fig:placeholder}
\end{figure}

\begin{figure}
    \centering
    \includegraphics[width=0.7\linewidth]{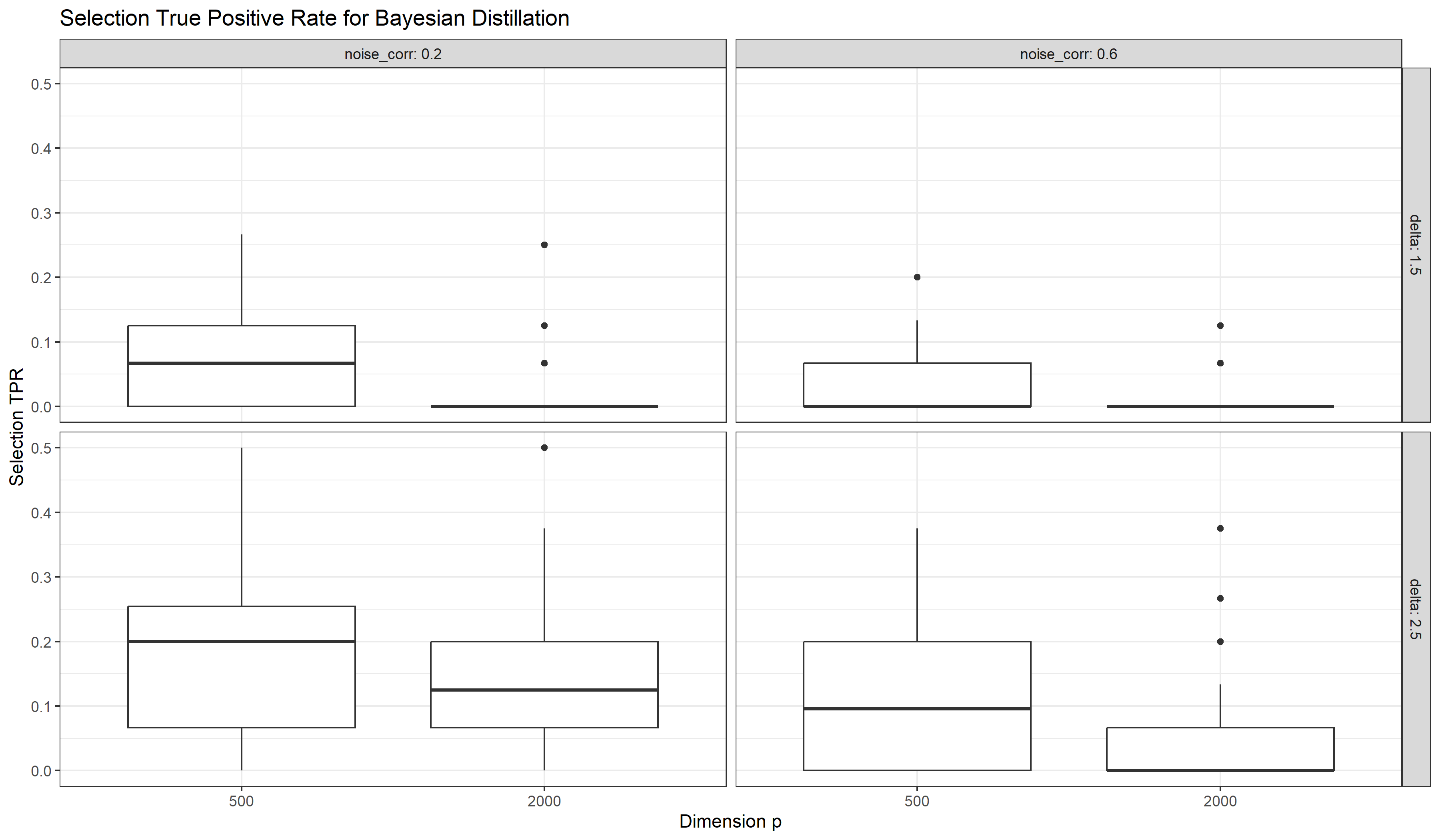}
    \caption{Selection true positive rate for Bayesian Distillation}
    \label{fig:placeholder}
\end{figure}

\begin{figure}
    \centering
    \includegraphics[width=0.7\linewidth]{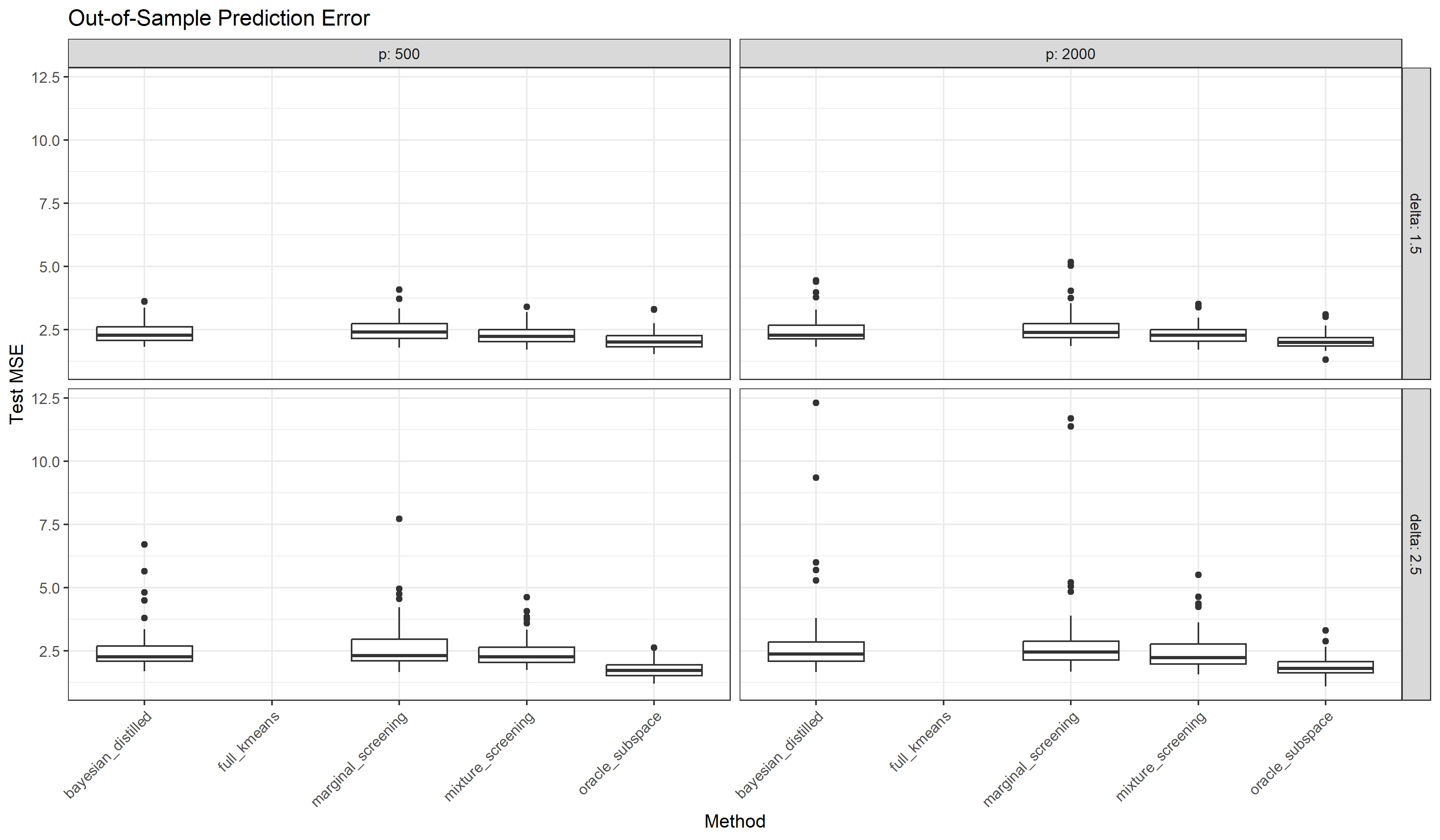}
    \caption{Out-of-sample prediction error}
    \label{fig:placeholder}
\end{figure}

\section{Conclusion}
\label{sec:conclusion}

High-dimensional clustering is most useful when the recovered groups correspond to interpretable and reproducible sources of heterogeneity. In mixture models, this objective is difficult to achieve when the relevant subgroup information is sparse and embedded in a large number of nuisance variables. The framework developed here addresses this difficulty by placing variable selection before clustering, while preserving a coherent probabilistic link between the selected variables, the latent subgroup labels, and the response.

The proposed procedure has three features that are especially important in scientific applications. First, the distilled subspace is selected using posterior evidence, rather than by marginal association or purely geometric dimension reduction. Second, clustering is performed only after the high-dimensional noise has been attenuated, which improves both interpretability and statistical stability. Third, the subgroup-specific conditional-independence analysis provides a second layer of inference: after the subgroups are estimated, one may examine how the selected variables relate to one another within each subgroup.

The theory formalizes these points. Under sparsity, signal-strength, identifiability, and separation conditions, the posterior concentrates on the true relevant variables, the distilled subspace preserves subgroup information, and the clustering rule consistently recovers the latent labels. Under additional sparsity assumptions on subgroup-specific precision matrices, conditional-independence tests within the estimated subgroups are asymptotically valid. These results support the use of Bayesian distillation as a principled tool for high-dimensional subgroup discovery, particularly in settings such as genomics and multi-omics studies where interpretability, uncertainty quantification, and structural insight are all central to the analysis.

\bibliographystyle{plainnat}
\bibliography{references}
\newpage
\appendix

\section{Appendix A - Proofs of Results}
\label{app:proofs}

\subsection{Proof of Theorem~\ref{thm:selection}}
\label{app:proof-selection}

\begin{proof}[Proof of Theorem~\ref{thm:selection}]
Write the posterior probability of an incorrect model as
\[
\Pi(\gamma\ne\gamma_0\mid D_n)
=
\frac{
\sum_{\gamma\ne\gamma_0}
m_\gamma(D_n)p(\gamma)
}{
\sum_{\gamma'}
m_{\gamma'}(D_n)p(\gamma')
},
\]
where
\[
m_\gamma(D_n)
=
\int L_n(\Theta,\gamma)p(\Theta\mid\gamma)\,d\Theta
\]
is the marginal likelihood under model \(\gamma\).

It is enough to show that
\[
\sum_{\gamma\ne\gamma_0}
\frac{
m_\gamma(D_n)p(\gamma)
}{
m_{\gamma_0}(D_n)p(\gamma_0)
}
\to0.
\]

Separate the incorrect models into two classes.

First, consider underfitted models, for which \(S_\gamma\not\supset S_0\). Such a model excludes at least one truly relevant variable. By Assumption~\ref{ass:signal}, excluding that variable causes a loss in Kullback-Leibler risk of at least order \(b_n^2\). Therefore there exists \(c_0>0\) such that
\[
\inf_{\Theta}
\KL(p_0,p_{\Theta,\gamma})
\ge
c_0 b_n^2,
\]
where \(\KL(p_0,p_{\Theta,\gamma})\) is the Kullback-Leibler divergence from the true density to the best density under model \(\gamma\). Standard likelihood concentration gives
\[
\frac{m_\gamma(D_n)}{m_{\gamma_0}(D_n)}
\le
\exp\{-c_1 n b_n^2\}
\]
with \(P_0\)-probability tending to one. Since \(n b_n^2\ge C_1\log p\), and \(C_1\) is sufficiently large, the sum of these ratios over all underfitted models tends to zero.

Second, consider overfitted models, for which \(S_\gamma\supsetneq S_0\). These models contain the true variables and at least one irrelevant variable. Their likelihood can improve only by fitting noise. The marginal likelihood automatically penalizes the additional dimension. More precisely, for a model with
\[
m=|S_\gamma|-s_0
\]
extra variables,
\[
\frac{m_\gamma(D_n)}{m_{\gamma_0}(D_n)}
=
O_{P_0}(n^{-c_2m})
\]
for some \(c_2>0\). The prior contributes the additional factor
\[
\frac{p(\gamma)}{p(\gamma_0)}
=
c^{-m}p^{-am}.
\]
The number of models with \(m\) extra variables is at most \(\binom{p}{m}\le p^m\). Hence
\[
\sum_{m\ge1}
p^m p^{-am} n^{-c_2m}
=
\sum_{m\ge1}
p^{-(a-1)m} n^{-c_2m}
\to0
\]
because \(a>1\). Thus the total posterior mass on overfitted models vanishes.

Combining the two cases yields
\[
\Pi(\gamma\ne\gamma_0\mid D_n)\to0,
\]
which proves the result.
\end{proof}

\subsection{Proof of Theorem~\ref{thm:information}}
\label{app:proof-information}

\begin{proof}[Proof of Theorem~\ref{thm:information}]
By Bayes' rule,
\[
\Prob(Z=k\mid X)
=
\frac{
\pi_k p(X\mid Z=k)
}{
\sum_{\ell=1}^K \pi_\ell p(X\mid Z=\ell)
}.
\]
Write \(X=(X_{S_0},X_{S_0^c})\). Then
\[
p(X\mid Z=k)
=
p(X_{S_0}\mid Z=k)
p(X_{S_0^c}\mid X_{S_0},Z=k).
\]
By Assumption~\ref{ass:irrelevance},
\[
p(X_{S_0^c}\mid X_{S_0},Z=k)
=
p(X_{S_0^c}\mid X_{S_0})
\]
for all \(k\). Therefore,
\[
\Prob(Z=k\mid X)
=
\frac{
\pi_k p(X_{S_0}\mid Z=k)p(X_{S_0^c}\mid X_{S_0})
}{
\sum_{\ell=1}^K
\pi_\ell p(X_{S_0}\mid Z=\ell)p(X_{S_0^c}\mid X_{S_0})
}.
\]
The common factor \(p(X_{S_0^c}\mid X_{S_0})\) cancels, so
\[
\Prob(Z=k\mid X)
=
\frac{
\pi_k p(X_{S_0}\mid Z=k)
}{
\sum_{\ell=1}^K
\pi_\ell p(X_{S_0}\mid Z=\ell)
}
=
\Prob(Z=k\mid X_{S_0}).
\]
Thus the selected subspace preserves all subgroup information.
\end{proof}

\subsection{Proof of Theorem~\ref{thm:clustering}}
\label{app:proof-clustering}

\begin{proof}[Proof of Theorem~\ref{thm:clustering}]
Let
\[
A_n=\{\widehat S=S_0\}.
\]
By Theorem~\ref{thm:selection},
\[
\Prob(A_n)\to1.
\]
On \(A_n\), clustering is performed on exactly the true informative variables. By Theorem~\ref{thm:information}, no subgroup information is lost by ignoring \(S_0^c\).

For Gaussian mixtures with bounded eigenvalues and identifiable components, the fitted mixture parameters are consistent up to label switching. Thus
\[
\max_k
\left(
\|\widehat\mu_k-\mu_{k0}\|
+
\|\widehat\Sigma_k-\Sigma_{k0}\|
+
|\widehat\pi_k-\pi_{k0}|
\right)
=o_P(1)
\]
after a suitable relabeling.

The misclassification rate of the plug-in classifier is therefore asymptotically equal to that of the Bayes classifier. For two Gaussian components with Mahalanobis distance \(\Delta_n\), standard Gaussian tail bounds imply that the probability of assigning an observation from one component to the other is at most
\[
C_0\exp(-c_0\Delta_n^2)
\]
for constants \(C_0,c_0>0\). Taking the union over the finite number of component pairs gives
\[
\Prob(\widehat Z_i\ne Z_i)
\le
C_1\exp(-c_1\Delta_n^2)+o(1).
\]
Averaging over \(i\) gives
\[
\frac1n
\sum_{i=1}^n
\ind(\widehat Z_i\ne Z_i)
=
O_P\{\exp(-c_1\Delta_n^2)\}+o_P(1).
\]

If \(\Delta_n^2\ge C\log n\), then
\[
n\exp(-c_1\Delta_n^2)\to0
\]
for sufficiently large \(C\). Hence, by the union bound,
\[
\Prob(\exists i:\widehat Z_i\ne Z_i)\to0,
\]
again up to label permutation. This proves exact recovery.
\end{proof}

\subsection{Proof of Theorem~\ref{thm:contraction}}
\label{app:proof-contraction}

\begin{proof}[Proof of Theorem~\ref{thm:contraction}]
By Theorem~\ref{thm:selection}, the posterior places probability tending to one on the true model \(S_0\). It is therefore sufficient to prove the contraction rate conditional on \(S_0\).

Within the true model, the effective parameter dimension is of order \(Ks_0\). The log likelihood is locally quadratic around the true parameter, and the Fisher information matrix has eigenvalues bounded away from zero and infinity under the regularity assumptions. Hence, for parameters at Euclidean distance \(r_n\) from the truth,
\[
r_n=M\sqrt{\frac{s_0\log p}{n}},
\]
the likelihood ratio is bounded by
\[
\exp(-c n r_n^2)
=
\exp(-c M^2 s_0\log p)
\]
outside the ball of radius \(r_n\).

The prior assigns at least
\[
\exp(-C s_0\log p)
\]
mass to a neighborhood of the true parameter. Choosing \(M\) sufficiently large makes the likelihood decay outside the ball dominate the prior mass denominator. Standard testing arguments for parametric models with growing dimension then give
\[
\Pi(\|\beta-\beta_0\|_2>M r_n\mid D_n,S_0)\to0.
\]
Combining this with \(\Pi(S_0\mid D_n)\to1\) proves the result.
\end{proof}

\subsection{Proof of Theorem~\ref{thm:ci-tests}}
\label{app:proof-ci-tests}

\begin{proof}[Proof of Theorem~\ref{thm:ci-tests}]
By Theorem~\ref{thm:clustering}, with probability tending to one the estimated subgroup labels equal the true subgroup labels up to permutation. Conditional on this event, the sample used for subgroup \(k\) is an independent sample from
\[
N(\mu_{k0},\Sigma_{k0}).
\]

Under the sparsity condition
\[
d_k^2\frac{\log s_0}{n_k}\to0,
\]
the graphical-lasso or nodewise-regression estimator has elementwise error small enough for debiasing. The debiased estimator admits the expansion
\[
\sqrt{n_k}
\left(
\widehat\Omega_{k,ab}^{d}
-
\Omega_{k0,ab}
\right)
=
\frac1{\sqrt{n_k}}
\sum_{i:Z_i=k}
\xi_{i,ab}
+
o_P(1),
\]
where \(\xi_{i,ab}\) are mean-zero random variables with variance \(\sigma_{k,ab}^2\). The central limit theorem gives
\[
\frac{
\sqrt{n_k}
(
\widehat\Omega_{k,ab}^{d}
-
\Omega_{k0,ab}
)
}{
\widehat\sigma_{k,ab}
}
\rightsquigarrow N(0,1).
\]
Under the null \(\Omega_{k0,ab}=0\), this is exactly the stated result for \(T_{k,ab}\). The Benjamini-Hochberg false-discovery control follows because the null \(p\)-values are asymptotically uniform. The additional error caused by estimated labels is \(o(1)\), since exact recovery holds with probability tending to one.

For a nonzero edge, the statistic has mean
\[
\frac{
\sqrt{n_k}\Omega_{k0,ab}
}{
\sigma_{k,ab}
},
\]
which diverges by assumption. Therefore the rejection probability tends to one.
\end{proof}

\end{document}